\newtheorem{theorem}{Theorem}
\newtheorem{lemma}{Lemma}
\newtheorem{assumption}{Assumption}
\theoremstyle{remark}
\numberwithin{equation}{section}
\newcommand{\Cx}{{\mathbb C}}
\newcommand{\bN}{{\mathbb N}}
\newcommand{\Ir}{{\mathbb Z}}
\newcommand{\A}{{\mathcal A}}
\newcommand{\cA}{{\mathcal A}}
\newcommand{\cB}{{\mathcal B}}
\newcommand{\cL}{{\mathcal L}}
\newcommand{\cK}{{\mathcal K}}
\def\idty{{\mathchoice {\mathrm{1\mskip-4mu l}} {\mathrm{1\mskip-4mu l}} %
{\mathrm{1\mskip-4.5mu l}} {\mathrm{1\mskip-5mu l}}}}
\newcommand{\id}{\mathop{\rm id}}
\newcommand{\boxendproof}{\hspace*{\fill}{{$\Box$}} \vspace{10pt}}
\newcommand{\be}{\begin{equation}}
\newcommand{\ee}{\end{equation}}
\newcommand{\bea}{\begin{eqnarray}}
\newcommand{\eea}{\end{eqnarray}}
\newcommand{\beann}{\begin{eqnarray*}}
\newcommand{\eeann}{\end{eqnarray*}}
\newcommand{\eq}[1]{(\ref{#1})}
\begin{document}

\title[Irreversible Dynamics]{Lieb-Robinson Bounds and Existence of the Thermodynamic Limit
for a Class of  Irreversible Quantum Dynamics}

\author[B. Nachtergaele]{Bruno Nachtergaele}
\address{Department of Mathematics, University of California at Davis,
Davis, California 95616}
\email{bxn@math.ucdavis.edu}

\author[A. Vershynina]{Anna Vershynina}
\address{Department of Mathematics, University of California at Davis,
Davis, California 95616}
\email{aver@math.ucdavis.edu}

\author[V.A. Zagrebnov]{Valentin A. Zagrebnov}
\address{
Universit\'{e} de la M\'editerran\'ee (Aix-Marseille II)\\
Centre de Physique Th\'eorique- UMR 6207 CNRS, Luminy - Case 907\\
13288 Marseille, Cedex 09, France}
\email{zagrebnov@cpt.univ-mrs.fr}

%    Fix General info
\subjclass{82C10, 82C20, 37L60, 46L57}

%Fix keywords
\keywords{Lieb-Robinson bounds, irreversible quantum dynamics, thermodynamic limit,
completely positive semigroup, time-dependent generator}

\dedicatory{Dedicated to Robert A. Minlos at the occasion of his 80th birthday}

\begin{abstract}
We prove Lieb-Robinson bounds and the existence of the thermodynamic limit
for a general class of irreversible dynamics for quantum lattice systems with
time-dependent generators that satisfy a suitable decay condition in space.
\end{abstract}

\maketitle
\footnotetext[1]{Copyright \copyright\ 2011 by the authors. This
paper may be reproduced, in its entirety, for non-commercial
purposes.}

\section{Introduction}\label{sec:intro}

For a quantum many-body Hamiltonian describing bulk matter we expect
that the Heisenberg dynamics converges in the thermodynamic limit
to a well-defined one-parameter flow of transformations on the observable algebra.
Early results of this kind were obtained for quantum spin systems
\cite{streater:1968,robinson:1968,ruelle:1969}, which were followed by
generalizations that included examples of irreversible dynamics
described
by a semigroup of completely positive unit preserving maps \cite{gorini:1976,spohn:1980}.
See, e.g., \cite{davies:1977,fannes:1978,bratteli:1980,unnerstall:1990,matsui:1993}
for results on the thermodynamic limit of a number of examples of semigroups
of completely positive maps.
In this work we study a general class of irreversible dynamics for quantum lattice
systems with generators that are sums of bounded terms that may depend on time
and that satisfy a suitable decay condition in space.

Following the argument of \cite{robinson:1976} propagation bounds of Lieb-Robinson
type \cite{lieb:1972} have recently been used to prove a number of new results on
the existence of the thermodynamic limit  \cite{nachtergaele:2006,amour:2009,
nachtergaele:2010, bachmann:2011}. These recent developments were made possible
by extensions and improvements of the Lieb-Robinson bounds themselves
\cite{hastings:2004,nachtergaele:2006a,hastings:2006,nachtergaele:2009a,nachtergaele:2009b}.

Lieb-Robinson type bounds for irreversible dynamics were, to our knowledge, first
considered in \cite{hastings:2004b} in the classical context and in \cite{poulin:2010}
for a class of quantum lattice systems with finite-range interactions. Here, we will extend
those results by proving a Lieb-Robinson bound for lattice models with a dynamics generated
by both Hamiltonian and dissipative interactions with suitably fast decay in space and
that may depend on time. See Assumption \ref{assumption1} for the precise conditions.
Then, we use our result to prove the existence of the thermodynamic
limit of the dynamics in the sense of a strongly continuous one-parameter flow of completely
positive unit preserving maps.

Our results are applicable to a wide range of model systems in statistical mechanics, quantum
optics, and quantum information and computation. In each of those areas, it is often necessary
to incorporate dissipative and time-dependent terms in the generator of the dynamics.
Fortunately, there is a large number of interesting systems defined on a lattice, which
so far is the only setting accessible by our methods to prove Lieb-Robinson bounds.
It is probably not a coincidence that proofs of the existence of the thermodynamic limit of
the dynamics have so far also been mostly restricted to lattice systems. Here, `lattice'
has to be interpreted loosely to mean a discrete set of points that are typically thought of as
distributed in space. In the case of the positions of atoms in a crystal, these positions
can indeed be described by a lattice, but all one needs is the structure of a metric graph
satisfying some regularity conditions. The detailed setup is given in Section
\ref{sec:setup}.

The existence of the thermodynamic limit is important as a fundamental property
of any model meant to describe properties of bulk matter. In particular, such properties
should be essentially independent of the size of the system which, of course, in any
experimental setup will be finite. In the past five years, Lieb-Robinson bounds have been
used to prove a variety of interesting results about condensed matter systems. 
See \cite{nachtergaele:2010b} for a brief overview of the applications of Lieb-Robinson bounds.

The paper is organized as follows. First, we describe the general setup necessary to
state our main results, which we do in Section \ref{sec:setup}. In that section we also state
the three main theorems we prove in this paper. Theorem \ref{thm:finite} states
that solution of the differential equation (master equation)
defined by finite volume generators we consider is a well-defined quantum dynamics,
i.e., a continuous family of completely positive unit preserving maps on the algebra of
observables. The proof of this theorem is obtained by standard methods, but for
completeness we included it here in Section \ref{sec:finite}. Theorem \ref{thm:lrbounds}
is the Lieb-Robinson bound, i.e., the propagation estimate for irreversible dynamics.
Again the theorem is stated in Section \ref{sec:setup} and then proved
in Section \ref{sec:lrbounds}. Theorem  \ref{thm:thermodynamiclimit}, the existence of 
the thermodynamic limit, is proved in Section \ref{sec:thermodynamiclimit}.

\section{Setup and main results}\label{sec:setup}

We consider quantum systems consisting of components associated with the
vertices $x\in\Gamma$, where $\Gamma$ is a countable set equipped with a metric $d$.
We assume that there exists a non-increasing function $F: [0, \infty)\rightarrow (0, \infty)$ 
such that:\\
i) $F$ is uniformly integrable over $\Gamma$, i.e.,
\begin{equation*}
\|F\|:= \sup_{x\in\Gamma}\sum_{y\in\Gamma} F(d(x,y)) < \infty,
\end{equation*}
and\\
ii) $F$ satisfies
\begin{equation*}
C:= \sup_{x,y\in \Gamma}\sum_{z\in\Gamma}\frac{F(d(x,z))F(d(y,z))}{F(d(x,y))}  < \infty.
\end{equation*}

Having such a set $\Gamma$ and a function $F$ that satisfies i) and ii), we can define
for any $\mu>0$ the function
\begin{equation*}
F_\mu(x)=e^{-\mu x}F(x),
\end{equation*}
which then also satisfies i) and ii) with $\|F_\mu\|\leq \|F\|$ and $C_\mu\leq C$.

The Hilbert space of states of the subsystem at $x\in\Gamma$ is denoted by $\mathcal{H}_x$.
For any finite subset $\Lambda\subset\Gamma$  the Hilbert associated with $\Lambda$ is
$$
\mathcal{H}_\Lambda=\bigotimes_{x\in \Lambda}\mathcal{H}_x.
$$
The algebra of observables supported in $\Lambda$ is defined by
$$
\mathcal{A}_\Lambda=\bigotimes_{x\in \Lambda}\mathcal{B}(\mathcal{H}_x),
$$
where $\mathcal{B}(\mathcal{H}_x)$  is the set of bounded linear operators on $\mathcal{H}_x$.
If $\Lambda_1\subset \Lambda_2$, then we may identify $\mathcal{A}_{\Lambda_1}$ in
a natural way with the subalgebra
$\mathcal{A}_{\Lambda_1}\otimes \idty_{\Lambda_2\setminus\Lambda_1}$ of
$\mathcal{A}_{\Lambda_2}$, and simply write
$\mathcal{A}_{\Lambda_1}\subset\mathcal{A}_{\Lambda_2}$.
The algebra of local observables is then defined as
$$
\mathcal{A}_{\Gamma}^{\rm loc}= \bigcup_{\Lambda\subset\Gamma}\mathcal{A}_\Lambda.
$$
The $C^*$-algebra of quasi-local observables $\mathcal{A}_{\Gamma}$ is the norm
completion of $\mathcal{A}_{\Gamma}^{\rm loc}$. See \cite{bratteli:1987,bratteli:1997}
for more details about this mathematical framework.

The support of the observable $A\in\mathcal{A}_\Lambda$ is the minimal set
$X\subset\Lambda$ for which  $A=A^\prime\otimes \idty_{\Lambda\setminus X}$
for some $A^\prime\in\mathcal{A}_X$.

The generator of the dynamics is defined for each finite volume $\Lambda\subset\Gamma$,
and, in general, contains both Hamiltonian interactions and dissipative terms, which we
allow to be time-dependent. The Hamiltonian terms are described by an interaction $\Phi(t, \cdot)$
which, for all $t\in\mathbb{R}$, is a map from a set of subsets of $\Gamma$ to
$\mathcal{A}_\Gamma$, such that for each finite set
$X\subset\Gamma$, $\Phi(t,X)\in\mathcal{A}_X$ and $\Phi(t,X)^*=\Phi(t,X)$.
The dissipative part is described by terms of Lindblad form determined, for each finite
$X\subset\Gamma$,  by a set of operators $L_a(t, X)\in\mathcal{A}_X$, $a=1,\ldots, N(X)$.
We can allow for the case $N(X)=\infty$, if we impose a suitable convergence condition
on the resulting series for the generator. Then, for any finite set $\Lambda\subset\Gamma$
and time $t\in\mathbb{R}$ we define the family of bounded linear maps
$\mathcal{L}_\Lambda: \mathcal{A}_\Lambda \rightarrow \mathcal{A}_\Lambda,$ i.e.
$\mathcal{L}_\Lambda\in\mathcal{B}(\mathcal{A}_\Lambda,\mathcal{A}_\Lambda)$, as follows:
for all $A\in\mathcal{A}_\Lambda$,
\begin{align}
\Psi_Z(t)(A)&= i[\Phi(t,Z), A]\label{psiz}\\
&\quad+\sum_{a=1}^{N(Z)} [L^*_a(t,Z)AL_{a}(t,Z)-\frac{1}{2}\{L_{a}(t,Z)^*L_{a}(t,Z), A\} ]\nonumber\\
\mathcal{L}_\Lambda(t)(A)&=\sum_{Z\subset\Lambda}  \Psi_Z(t)(A),\label{Def_L}
\end{align}
where $\{A, B\}=AB+BA$, is the anticommutator of $A$ and $B$. The operators $\Psi_Z(t)$
can be regarded as bounded linear transformations on $\cA_X$, for any $X\subset\Lambda$
that contains $Z$, which are then of the form $\Psi_Z(t)\otimes \id_{\cA_{X\setminus Z}}$.
The norm of these maps, in general, depends on $X$, but they are uniformly
bounded as follows:
$$
\Vert \Psi_Z(t)\Vert\leq 2\Vert \Phi(t,Z)\Vert + 2 \sum_{a=1}^{N(Z)} \Vert L_a(t,Z)\Vert^2
$$
If $N(z)=\infty$, we can insure uniform boundedness by assuming that the sums
$\sum_{a=1}^{\infty} \Vert L_a(t,Z)\Vert^2$ converge. It is more general and more natural,
however, to assume that  the maps $\Psi_Z(t)$, defined on $\cA_Z$ are
completely bounded. By definition, $\Psi\in\cB(\cA_Z)$ is called {\em
completely bounded} if for all $n\geq 1$, the linear maps $\Psi\otimes\id_{M_n}$,
defined on $\A_Z\otimes M_n$, where $M_n = \cB(\Cx^n)$ are the $n\times n$ complex
matrices, are bounded with uniformly bounded norm. This means that we can define 
the cb-norm of $\Psi$ by
$$
\Vert \Psi\Vert_{\rm cb}=\sup_{n\geq 1} \Vert \Psi\otimes {\rm id}_{M_n}\Vert<\infty\,
$$
In particular, this definition implies that the cb-norm of $\Psi_Z(t)$, which can be considered
as a linear map defined on $\cA_\Lambda$ for all $\Lambda\subset\Gamma$ such that  
$Z\subset \Lambda$, is {\em independent} of $\Lambda$.
See \cite{effros:2000,paulsen:2002} for more information on completely bounded maps.
Assuming that $\Vert \Psi_Z(t)\Vert_{\rm cb}$ is finite is more general than assuming
that the series $\sum_{a=1}^{\infty} \Vert L_a(t,Z)\Vert^2$ converges which, however,
is a useful sufficient condition for it. In particular, there are situations where the sum in
\eq{psiz} only converges in the strong operator topology but nevertheless yields
a well-defined limit with finite cb-norm.

\begin{assumption}\label{assumption1}
Given $(\Gamma, d)$ and $F$ as described at the beginning of this section, the following hypotheses hold:

\begin{enumerate}
\item
For all finite $\Lambda\subset\Gamma$, $\mathcal{L}_\Lambda(t)$ is norm-continuous in $t$,
and hence uniformly continuous on compact intervals.
\item
There exists $\mu>0$ such that for every $t\in\mathbb{R}$
\begin{equation}\label{Decay_L}
\|\Psi\|_{t,\mu}:=\sup_{s\in[0,t]}\sup_{x,y\in\Lambda}
\sum_{Z\ni x,y}\frac{\|\Psi_Z(s)\|_{\rm cb}}{F_\mu(d(x,y))}<\infty.
\end{equation}
where $\Vert\cdot\Vert_{\rm cb}$ denotes the cb-norm of completely bounded
maps \cite{paulsen:2002}.
\end{enumerate}
\end{assumption}

Note that 
$$
\|\mathcal{L}_\Lambda(t)\|\leq \sum_{Z\subset\Lambda}\|\Psi_Z(t)\|\leq\sum_{x,y\in\Lambda}
\sum_{Z\ni x,y}\|\Psi_Z(t)\|_{\rm cb}\leq \|\Psi\|_{t,\mu}|\Lambda|\|F\|.
$$
We define
\begin{equation}\label{Mt}
M_t=\|\Psi\|_{t,\mu}|\Lambda|\|F\|\, .
\end{equation}
Then by (\ref{Decay_L}) one gets $M_s\leq M_t$ for $s<t$.

Fix $T>0$ and, for all $A\in\A_\Lambda$, let $A(t), t\in [0,T]$ be a solution of the initial
value problem
\begin{equation}\label{ode}
 \frac{d}{dt} A(t)=\cL_\Lambda(t)A(t),\quad A(0)=A.
\end{equation}
Since $\|\mathcal{L}_\Lambda(t)\|\leq M_T<\infty$, this solution exists and is unique by the
standard existence and uniqueness results for ordinary differential equations.
For $0\leq s\leq t\leq T$, define the family of maps
$\{\gamma_{t,s}^\Lambda\}_{0\leq s\leq t} \subset \cB(\cA_\Lambda, \cA_\Lambda)$ by $\gamma_{t,s}^\Lambda(A)=A(t)$, 
where $A(t)$ is the unique solution of \eq{ode} for $t\in [s,T]$ with initial condition $A(s)=A$. Then,
the {\em cocycle property}, $\gamma_{t,s}(A(s))=A(t)$, follows from the uniqueness of the
solution of \eq{ode}. Recall that a linear map $\gamma:\cA\to\cB$, where $\cA$ and $\cB$
are $C^*$-algebras is called {\em completely positive} if the maps $\gamma\otimes\id:
\cA\otimes M_n \to \cB\otimes M_n$ are positive for all $n\geq 1$. Here $M_n$ stands
for the $n\times n$ matrices with complex entries, and positive means that positive
elements (i.e., elements of the form $A^*A$)  are mapped into positive elements.
See, e.g., \cite{paulsen:2002} for a discussion of the basic properties of completely positive
maps. In particular, we shall use the property that every unit preserving (i.e. $\gamma(\idty_\cA)=\idty_\cB $)
completely positive map $\gamma$,  is a contraction: $\Vert \gamma(A)\Vert
\leq \Vert A\Vert$.

As a preliminary result we prove the following Theorem \ref{thm:finite} in Section \ref{sec:finite}.
It extends the well-known result for time-independent generators of Lindblad form
\cite{lindblad:1976} to the time-dependent case.

\begin{theorem}\label{thm:finite}
Let $\cA$ be a $C^*$-algebra, $T>0$, and for $t\in [0,T]$, let $\cL(t)$ be a norm-continuous
family of bounded linear operators on $\cA$.
If\newline
(i) $\mathcal{L}(t)(\idty)=0;$\newline
(ii) for all $A\in\cA$, $\mathcal{L}(t)(A^{*})=\mathcal{L}(t)(A)^{*} $;\newline
(iii) for all $A\in\cA$,   $\mathcal{L}(t)(A^* A)-\mathcal{L}(t)(A^*)A-A^*\mathcal{L}(t)(A)\geq 0$;\newline
then the maps $\gamma_{t,s}$, $0\leq s\leq t\leq T$, defined by equation \eq{ode}, are a
norm-continuous cocycle of unit preserving completely positive maps.
\end{theorem}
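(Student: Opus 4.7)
The plan is to treat the four conclusions of the theorem in order of difficulty, and to reduce the nontrivial step---complete positivity---to the time-independent Lindblad theorem via a time-discretization argument.

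Existence and uniqueness of $\gamma_{t,s}$ together with joint norm-continuity in $(s,t)$ follow from the Picard--Lindel\"of theorem in the Banach space $\mathcal{B}(\mathcal{A})$, using that $t\mapsto \mathcal{L}(t)$ is norm-continuous and hence uniformly bounded on $[0,T]$. The cocycle relation is then immediate from uniqueness, as is unit preservation: hypothesis (i) ensures the constant function $X(t)\equiv\idty$ solves \eqref{ode}, forcing $\gamma_{t,s}(\idty)=\idty$. Preservation of self-adjointness is the same argument using (ii): both $t\mapsto \gamma_{t,s}(A)^{*}$ and $t\mapsto \gamma_{t,s}(A^{*})$ solve \eqref{ode} with initial condition $A^{*}$, so they agree.

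The substantive step is complete positivity, for which I would use a frozen-coefficient approximation. For a partition $\pi\colon s=t_0<t_1<\cdots<t_n=t$ of $[s,t]$, set
$$
\gamma^{\pi}_{t,s} \;=\; e^{(t_n-t_{n-1})\mathcal{L}(t_{n-1})}\circ\cdots\circ e^{(t_1-t_0)\mathcal{L}(t_0)}.
$$
For each fixed $\tau\in[0,T]$ the generator $\mathcal{L}(\tau)$ satisfies conditions (i)--(iii), which are exactly the hypotheses under which Lindblad \cite{lindblad:1976} shows that $\{e^{r\mathcal{L}(\tau)}\}_{r\ge 0}$ is a norm-continuous semigroup of unit-preserving, completely positive maps on $\mathcal{A}$; hence each factor in $\gamma^{\pi}_{t,s}$, and therefore their composition, is CP and unit preserving. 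A Gronwall-type estimate together with the uniform continuity of $t\mapsto \mathcal{L}(t)$ on $[0,T]$ then gives $\|\gamma^{\pi}_{t,s}-\gamma_{t,s}\|\to 0$ as $|\pi|\to 0$, and since the set of completely positive (resp.\ unit-preserving) maps is norm-closed in $\mathcal{B}(\mathcal{A})$, the limit $\gamma_{t,s}$ inherits both properties.

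The main subtlety in this argument is the appeal to Lindblad's theorem: as literally written, (iii) is a dissipativity condition, while complete positivity of the semigroup $e^{r\mathcal{L}(\tau)}$ requires the corresponding inequality for $\mathcal{L}(\tau)\otimes\id_{M_n}$ on $\mathcal{A}\otimes M_n$ for every $n\ge 1$ (complete dissipativity). The intended reading is that (iii) holds in this matrix-valued form; this is automatic for the generators \eqref{psiz}--\eqref{Def_L} built from Hamiltonian and Lindblad terms, which are the generators actually used throughout the rest of the paper.
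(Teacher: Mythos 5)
Your proposal is correct, but it reaches complete positivity by a genuinely different route than the paper. The paper never invokes Lindblad's time-independent theorem: it approximates $\gamma_{t,0}$ by the \emph{linear} Euler product $T_n(t)=\prod_{k=n}^1\left(\id+\frac{t}{n}\mathcal{L}(\frac{kt}{n})\right)$ of \eq{Tn}, proves norm convergence in Lemma \ref{lem:Euler_app}, and then shows by a self-contained induction, using only the dissipativity hypothesis (iii) in the form $(\id+s\mathcal{L})(A^*A)\geq -s^2M^2\|A\|^2$, that $T_n(t)(A^*A)$ is bounded below by a negative operator of order $1/n$ (the bound \eq{bound}); positivity of $\gamma_{t,0}$ then follows in the limit $n\to\infty$. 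Your frozen-coefficient scheme instead replaces each linear factor by the exact exponential $e^{(t_j-t_{j-1})\mathcal{L}(t_{j-1})}$, outsources positivity of each factor to Lindblad's theorem \cite{lindblad:1976}, and concludes by norm-closedness of the cone of unit-preserving CP maps; the mesh-convergence estimate you sketch is the exact analogue of Lemma \ref{lem:Euler_app} and goes through the same way (indeed slightly more easily, since each exponential factor, being CP and unit preserving, is a contraction, which cleans up the telescoping estimate). What your route buys is brevity and modularity; what the paper's route buys is self-containedness---the hard positivity work that Lindblad's theorem encapsulates (complete dissipativity implies a CP semigroup, a nontrivial fact on a general $C^*$-algebra) is done by hand via the induction, and the time-independent case is recovered as the constant-generator special case rather than assumed. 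Finally, the subtlety you flag about hypothesis (iii) versus its matrix amplification is real, and you should note that the paper makes exactly the same implicit move: the last paragraph of its proof asserts that the amplified generator on $\cA\otimes\cB(\Cx^n)$ ``satisfies the same properties,'' which is precisely the complete-dissipativity reading of (iii); as you observe, and as the paper verifies explicitly in Section \ref{sec:setup}, this holds for the concrete generators \eq{psiz}--\eq{Def_L} to which the theorem is applied.
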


It is straightforward to check that the $\cL_\Lambda(t)$ defined in \eq{Def_L}
satisfy properties (i) and (ii). Property (iii), which is called {\em complete dissipativity},
follows immediately from the observation
$$
\mathcal{L}(t)(A^* A)-\mathcal{L}(t)(A^*)A-A^*\mathcal{L}(t)(A)
=\sum_{Z\subset\Lambda}\sum_{a=1}^{N(Z)} [A, L_a(t,Z)]^*[A, L_a(t,Z)]\, \geq 0 \, .
$$
Therefore,  using this result, we conclude that, under Assumption \ref{assumption1}, for all
finite $\Lambda\subset\Gamma$, the maps $\gamma_{t,s}^\Lambda$, $0\leq s\leq t$, form
a norm-continuous cocycle of completely positive and unit preserving maps.

Section \ref{sec:lrbounds} is devoted to proving a Lieb-Robinson bound for the
irreversible dynamics $\gamma_{t,s}^\Lambda$. For reversible dynamics given
by the one-parameter group of automorphisms $\tau_t$ describing the Heisenberg
dynamics generated by a Hamiltonian, Lieb-Robinson bounds take the following
form: there are constants $v, \mu>0$ such that  for $A\in A_X$ and $B\in \A_Y$,
\begin{equation}\label{lrrev}
\Vert [A,\tau_t(B)]\Vert \leq C(A,B) e^{-\mu(d(X,Y)-v|t|)}\,,
\end{equation}
where $d(X,Y)$ denotes the distance between $X$ and $Y$ and $C(A,B)$ is a
prefactor, which typically has the form $c\Vert A\Vert\, \Vert B\Vert \min(| X|,|Y|)$,
for a suitable norm $\Vert\cdot\Vert$ on the observables  $A$ and $B$,
and a suitable measure $|\cdot |$ on the size of the supports $X$ and $Y$.
Bounds of this form are sufficient to determine the approximate support of
the time-evolved observable $\tau_t(B)$. See, e.g., \cite[Lemma 3.1]{bachmann:2011}.

For irreversible dynamics, it turns out to be both natural and convenient to consider 
a slightly more general formulation. For $X\subset\Lambda$, let $\cB_X$ denote the 
subspace of $\cB(\cA_X)$ consisting of all completely bounded 
linear maps that vanish on $\idty$. See the discussion directly preceding Assumption
\ref{assumption1} for the definition of complete boundedness and the cb-norm 
$\Vert\cdot\Vert_{\rm cb}$.
It is important for us that  all operators of the form
$$
\mathcal{K}_X(B)= [A, B]+\sum_{a=1}^{N} [L^*_a B L_{a}-\frac{1}{2}\{L_{a}^*L_{a}, B\} ]\,,
$$
where $A, L_a\in \cA_X$, belong to $\cB_X$, with
$$
\Vert \cK_X\Vert_{\rm cb}\leq 2\Vert A\Vert +2\sum_{a=1}^N \Vert L_a\Vert^2.
$$
In particular, 
operators of the form $[A,\cdot]$ appearing in the standard Lieb-Robinson 
bound \eq{lrrev} are a special case of this general form. Then, we can regard 
$\cK_X$ as a linear transformation on $\cA_Z$, for all $Z$ such that $X\subset Z$,
by tensoring it with $\id_{\cA_{Z\setminus X}}$, and all these maps will be bounded
with norm less then $\Vert \cK_X\Vert_{\rm cb}$.

\begin{theorem}\label{thm:lrbounds}
Suppose Assumption \ref{assumption1} holds. Then  the maps $\gamma^\Lambda_{t,s}$
satisfy the following bound. For $X,Y\subset\Lambda$, and any operators $\cK\in\cB_X$ and
$B\in\cA_Y$ we have that
\begin{equation*}
\|\cK(\gamma_{t,s}^\Lambda(B))\|\leq \frac{\| \cK \|_{\rm cb}\,\|B\|}{C_\mu} e^{\|\Psi\|_{t,\mu}
C_\mu |t-s|}\sum_{x\in X}\sum_{y\in Y}F_\mu(d(x,y))\,.
\end{equation*}
\end{theorem}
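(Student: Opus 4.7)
The plan is to derive a Duhamel-type integral equation for $\cK(\gamma^\Lambda_{t,s}(B))$ that isolates the Lindblad terms whose supports overlap $X$, and then to iterate this equation to obtain a Dyson series whose resummation yields the claimed exponential estimate.

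First, I would split the generator as $\cL_\Lambda(u) = \tilde\cL(u) + \sum_{Z \cap X \neq \emptyset}\Psi_Z(u)$, where $\tilde\cL(u) := \sum_{Z \cap X = \emptyset}\Psi_Z(u)$ contains only terms supported in $X^c$, and let $\tilde\gamma^\Lambda_{t,u}$ denote the cocycle generated by $\tilde\cL$. By Theorem~\ref{thm:finite}, $\tilde\gamma^\Lambda_{t,u}$ is completely positive and unit preserving, hence contractive; moreover, since every term in $\tilde\cL(u)$ acts only on $\cA_{X^c \cap \Lambda}$, the cocycle factorizes as $\id_{\cA_X}\otimes\tilde\gamma^\Lambda_{t,u}|_{\cA_{X^c \cap \Lambda}}$, and so it commutes with any $\cK\in\cB_X$.

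Applying the variation-of-parameters formula to $\tilde\gamma^\Lambda_{t,u}\gamma^\Lambda_{u,s}(B)$ yields
\[
\gamma^\Lambda_{t,s}(B) - \tilde\gamma^\Lambda_{t,s}(B) = \int_s^t \tilde\gamma^\Lambda_{t,u}\Bigl(\sum_{Z \cap X \neq \emptyset}\Psi_Z(u)\,\gamma^\Lambda_{u,s}(B)\Bigr)\,du.
\]
Applying $\cK$ and using the commutation and contractivity above, I obtain
\[
\|\cK(\gamma^\Lambda_{t,s}(B))\| \leq \|\cK(B)\| + \int_s^t \sum_{Z \cap X \neq \emptyset}\bigl\|(\cK\Psi_Z(u))(\gamma^\Lambda_{u,s}(B))\bigr\|\,du.
\]
The decisive observation is that the composition $\cK\Psi_Z(u)$ again vanishes on the identity (since $\Psi_Z(u)(\idty)=0$) and therefore belongs to $\cB_{X\cup Z}$, with cb-norm bounded by $\|\cK\|_{\rm cb}\|\Psi_Z(u)\|_{\rm cb}$. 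Consequently the inequality may be iterated with the replacements $\cK\mapsto\cK\Psi_Z(u)$ and $X\mapsto X\cup Z$.

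Iterating $n$ times produces a Dyson series whose $n$-th term involves time-ordered integrations over $s\leq u_n\leq\cdots\leq u_1\leq t$ and a sum over chains $Z_1,\ldots,Z_n$ with $Z_i$ overlapping the accumulated region $X\cup Z_1\cup\cdots\cup Z_{i-1}$; the residual boundary factor is bounded by $\|\cK\|_{\rm cb}\|B\|$ when the total region meets $Y$ and vanishes otherwise. Using the hypothesis $\sum_{Z\ni x,y}\|\Psi_Z(u)\|_{\rm cb} \leq \|\Psi\|_{t,\mu}F_\mu(d(x,y))$, each inner sum over $Z_i$ is converted into a sum over a pair of vertices, one anchored in the previous region and one free, producing an overall chain of $F_\mu$ factors connecting a point of $X$ to a point of $Y$.

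The main obstacle is precisely this combinatorial bookkeeping, and the key analytic input is the convolution inequality $\sum_z F_\mu(d(x,z))F_\mu(d(z,y)) \leq C_\mu F_\mu(d(x,y))$ that underlies the definition of $C_\mu$. Applying it $n-1$ times collapses the chain $F_\mu(d(x_0,x_1))F_\mu(d(x_1,x_2))\cdots F_\mu(d(x_{n-1},x_n))$ summed over the intermediate vertices to $C_\mu^{n-1}F_\mu(d(x_0,x_n))$, yielding a net bound $\|\Psi\|_{t,\mu}^n C_\mu^{n-1}\sum_{x\in X,\,y\in Y}F_\mu(d(x,y))$. Combined with the time-ordered integration producing $(t-s)^n/n!$, the $n$-th order term is at most
\[
\frac{\|\cK\|_{\rm cb}\|B\|}{C_\mu}\cdot\frac{\bigl(\|\Psi\|_{t,\mu}C_\mu\,|t-s|\bigr)^n}{n!}\sum_{x\in X,\,y\in Y}F_\mu(d(x,y)),
\]
and summing over $n\geq 0$ yields the claimed exponential factor.
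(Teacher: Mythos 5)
Your opening moves coincide exactly with the paper's: the split of the generator into $\cL_{X^c}(u)=\sum_{Z\cap X=\emptyset}\Psi_Z(u)$ plus the rest, the variation-of-parameters (Duhamel) identity, the commutation of $\cK$ with the cocycle generated by $\cL_{X^c}$, its contractivity via Theorem \ref{thm:finite}, and the use of the $\Lambda$-independent cb-norm are all correct and are precisely how the paper begins. The genuine gap is in the iteration. Because you iterate with the \emph{composed} map $\cK\Psi_{Z_1}(u_1)\cdots\Psi_{Z_{i-1}}(u_{i-1})\in\cB_{X\cup Z_1\cup\cdots\cup Z_{i-1}}$, the constraint at step $i$ is $Z_i\cap\bigl(X\cup Z_1\cup\cdots\cup Z_{i-1}\bigr)\neq\emptyset$: the collections $(Z_1,\ldots,Z_n)$ form \emph{trees} rooted at $X$ (each $Z_i$ may attach to $X$ or to any earlier $Z_j$), not chains with consecutive overlaps. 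Writing $S_0=X$, $S_j=Z_j$, the bound $\delta_{S_0\cup\cdots\cup S_{i-1}}(Z_i)\leq\sum_{j=0}^{i-1}\delta_{S_j}(Z_i)$ gives up to $i$ attachment choices at step $i$, hence on the order of $n!$ patterns at order $n$. Each pattern does yield a product of $F_\mu$-factors summable via $C_\mu$ and $\|F\|$, but the tree count cancels the simplex volume $(t-s)^n/n!$, so the honest estimate of the $n$-th term is geometric, of size $(K|t-s|)^n$, not the claimed $(\|\Psi\|_{t,\mu}C_\mu|t-s|)^n/n!$. Your statement that each inner sum converts into ``one vertex anchored in the previous region, one free'' is valid only for simple chains, which your iteration does not produce; as written, your series converges only for short times (it could be bootstrapped through the cocycle property to some exponential bound, but with constants worse than the theorem asserts).

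The paper avoids the trees by peeling off the outer map \emph{before} iterating: since the cb-norm dominates the operator norm on every $\cA_\Lambda$, one has $\|\cK\,\Psi_Z(u)\gamma^\Lambda_{u,s}(B)\|\leq\|\cK\|_{\rm cb}\,\|\Psi_Z(u)\gamma^\Lambda_{u,s}(B)\|$, and the iteration is run on the $\Lambda$-independent quantity
\begin{equation*}
C_B(X,t)=\sup_{\cT\in\cB_X}\frac{\|\cT\gamma^\Lambda_{t,s}(B)\|}{\|\cT\|_{\rm cb}}\,,
\end{equation*}
which satisfies $C_B(X,t)\leq C_B(X,s)+\sum_{Z\cap X\neq\emptyset}\int_s^t\|\Psi_Z(r)\|_{\rm cb}\,C_B(Z,r)\,dr$ together with $C_B(Z,s)\leq\|B\|\,\delta_Y(Z)$. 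At each step the map acting on $\gamma^\Lambda_{r,s}(B)$ is the single term $\Psi_{Z_i}\in\cB_{Z_i}$, so the next overlap condition involves $Z_i$ alone; this produces genuine chains, the estimate $a_n\leq\|\Psi\|_{t,\mu}^n C_\mu^{n-1}\sum_{x\in X}\sum_{y\in Y}F_\mu(d(x,y))$ after $n-1$ applications of the convolution inequality, and hence the stated exponential with prefactor $1/C_\mu$. Inserting this peel-off step into your scheme (i.e., never carrying the composed map into the next iteration) repairs the argument and recovers the paper's proof verbatim.
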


Note that the bound in this theorem is \textit{uniform} in $\Lambda$. This is important
for the proof of existence of the thermodynamic limit of the dynamics, which
is the main application of Lieb-robinson bounds in the present paper. 

As a final 
comment about the use of the cb-norm in the definition of $\|\Psi\|_{t,\mu}$ (see \eq{Decay_L})
we would like to point out that volume-independent bounds for the operator
norm, such as $\Vert [\Psi, \cdot]\Vert \leq 2\Vert \Psi\Vert$, which appear in 
all previous Lieb-Robinson bounds, are always an upper bound for the norm
used here. This is also true for the case of reversible dynamics. The bound stated here 
will give a sharper result in some cases. It has been suggested that the
addition of dissipative terms to the generator of the dynamics would not 
increase the Lieb-Robinson velocity. For example, in \cite{schuch:2010} it is correctly
argued that the bounds derived in that paper remain valid without change
if one adds particle loss terms. While this is trivially true for arbitrary single-site terms, 
it is not clear that the same comparison would hold in general. We also need
to alert the reader that the bounds derived in \cite{schuch:2010}, while valid for 
lattice bosons with a finite number of particles, have a prefactor which depends 
on the particle number (at least linearly for the simplest observables,
and worse than linear for more general observables). 
In this sense the results of  \cite{schuch:2010} are not a true extension of
\cite{poulin:2010} to many-body boson systems as discussed, e.g., in the recent 
book \cite{verbeure:2011}.

Also note that the bound given in Theorem \ref{thm:lrbounds} can be further
improved by omitting in the definition of $\|\Psi\|_{t,\mu}$ all terms that act on a 
single site, i.e., which belong to $\cB_{\{x\}}$ for some $x\in\Gamma$, and also
all terms in $\cB_X$, where $X$ is the set for which the bound is derived. By the
argument in \cite{nachtergaele:2009a} we can even allow the single site terms
to be unbounded, as long as they lead to a well-defined single-site dynamics.

In this paper, we restrict ourselves to applying the Lieb-Robinson bound of Theorem 
\ref{thm:lrbounds} to proving the existence of the thermodynamic limit of 
a general family of irreversible dynamics. Other applications, such as approximate
factorization of invariant states, analogous to what is done for ground states
of reversible systems in \cite{hamza:2009}.

The setup for the analysis of the thermodynamic limit can be formulated as follows. 
Let $\Gamma$ be 
an infinite set such as, e.g., the hypercubic lattice $\Ir^\nu$. We prove the existence
of the thermodynamic limit for an increasing and exhausting sequence of finite
subsets $\Lambda_n\subset\Gamma$, $n\geq 1$, by showing that  for each $A\in \cA_X$,
$(\gamma^{\Lambda_n}_{t,s}(A))_{n\geq 1}$ is a Cauchy sequence in the norm of
$\cA_\Gamma$. To this end we have to suppose that Assumption \ref{assumption1} (2) holds 
\textit{uniformly} for all $\Lambda_n$, i.e., we can replace $\Lambda$ in \eq{Decay_L} by
$\Gamma$.

\begin{theorem}\label{thm:thermodynamiclimit}
Suppose that Assumption \ref{assumption1} holds and, in addition, that
\eq{Decay_L} holds for $\Lambda =\Gamma$. Then, there exists a strongly continuous 
cocycle of unit-preserving completely positive maps $\gamma_{t,s}^\Gamma$ on 
$\cA_\Gamma$ such that for all $0\leq s\leq t$, and any increasing exhausting sequence 
of finite subsets  $\Lambda_n\subset\Gamma$, we have
\begin{equation}
\lim_{n\rightarrow\infty}\|\gamma_{t,s}^{\Lambda_n}(A)-\gamma_{t,s}^\Gamma(A)\|=0,
\end{equation}
for all $A\in\cA_\Gamma$.
\end{theorem}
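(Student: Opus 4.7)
The plan is to show that for every local observable $A\in\cA_\Gamma^{\rm loc}$ the sequence $(\gamma^{\Lambda_n}_{t,s}(A))_{n\ge1}$ is Cauchy in $\cA_\Gamma$ uniformly for $0\le s\le t\le T$. The limit then extends to all of $\cA_\Gamma$ by density, because each $\gamma^{\Lambda_n}_{t,s}$ is a contraction, being unit preserving and completely positive by Theorem~\ref{thm:finite}. The engine of the Cauchy estimate is the standard interpolation identity
\begin{equation*}
\gamma^{\Lambda_n}_{t,s}(A)-\gamma^{\Lambda_m}_{t,s}(A)=\int_s^t \gamma^{\Lambda_m}_{t,r}\bigl[\bigl(\cL_{\Lambda_n}(r)-\cL_{\Lambda_m}(r)\bigr)\gamma^{\Lambda_n}_{r,s}(A)\bigr]\,dr,
\end{equation*}
valid for $\Lambda_n\subset\Lambda_m$ and obtained by differentiating $r\mapsto \gamma^{\Lambda_m}_{t,r}\gamma^{\Lambda_n}_{r,s}(A)$ and using the defining ODE for each $\gamma^\Lambda_{t,s}$. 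Since $\cL_{\Lambda_n}(r)-\cL_{\Lambda_m}(r)=-\sum_{Z\subset\Lambda_m,\,Z\not\subset\Lambda_n}\Psi_Z(r)$ and $\gamma^{\Lambda_m}_{t,r}$ is a contraction, we obtain
\begin{equation*}
\|\gamma^{\Lambda_n}_{t,s}(A)-\gamma^{\Lambda_m}_{t,s}(A)\|\le \int_s^t \sum_{\substack{Z\subset\Lambda_m\\ Z\not\subset\Lambda_n}} \|\Psi_Z(r)\,\gamma^{\Lambda_n}_{r,s}(A)\|\,dr.
\end{equation*}

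Next, each $\Psi_Z(r)$ belongs to $\cB_Z$ (it annihilates $\idty$ and is completely bounded), so Theorem~\ref{thm:lrbounds} applies and, for $A\in\cA_X$, gives
\begin{equation*}
\|\Psi_Z(r)\gamma^{\Lambda_n}_{r,s}(A)\|\le \frac{\|\Psi_Z(r)\|_{\rm cb}\,\|A\|}{C_\mu}\,e^{\|\Psi\|_{T,\mu}C_\mu T}\sum_{x\in Z}\sum_{y\in X}F_\mu(d(x,y)).
\end{equation*}
Every $Z$ in the sum contains at least one witness $w\in\Lambda_m\setminus\Lambda_n$; summing first over such a $w$, then over $Z\ni w$ using Assumption~\ref{assumption1}(2) in the form $\sum_{Z\ni w,x}\|\Psi_Z(r)\|_{\rm cb}\le \|\Psi\|_{T,\mu}F_\mu(d(w,x))$, and finally using the convolution bound~(ii) $\sum_x F_\mu(d(x,y))F_\mu(d(w,x))\le C_\mu F_\mu(d(w,y))$, collapses the $Z$-sum to $\|\Psi\|_{T,\mu}C_\mu\sum_{y\in X}\sum_{w\in\Gamma\setminus\Lambda_n}F_\mu(d(w,y))$. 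Altogether,
\begin{equation*}
\|\gamma^{\Lambda_n}_{t,s}(A)-\gamma^{\Lambda_m}_{t,s}(A)\|\le \|A\|\,\|\Psi\|_{T,\mu}\,T\,e^{\|\Psi\|_{T,\mu}C_\mu T}\sum_{y\in X}\sum_{w\in\Gamma\setminus\Lambda_n}F_\mu(d(w,y)),
\end{equation*}
which is independent of $m\ge n$ and tends to $0$ as $n\to\infty$ by the uniform summability of $F_\mu$ (property~(i) with $X$ finite) and the exhaustion of $\Gamma$ by the $\Lambda_n$, uniformly in $(s,t)\in\{0\le s\le t\le T\}$.

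Once this Cauchy property is in hand, the limit $\gamma^\Gamma_{t,s}(A):=\lim_n \gamma^{\Lambda_n}_{t,s}(A)$ is defined for local $A$ and extends uniquely to $\cA_\Gamma$ by density and contractivity. Complete positivity, unit preservation, and the cocycle identity all pass to pointwise norm limits and so hold for $\gamma^\Gamma_{t,s}$. Strong continuity of $(t,s)\mapsto \gamma^\Gamma_{t,s}(A)$ for local $A$ follows because the convergence is uniform on $[0,T]^2$ and every $\gamma^{\Lambda_n}_{t,s}$ is norm continuous in $(t,s)$; the uniform bound $\|\gamma^\Gamma_{t,s}\|\le 1$ plus density of $\cA_\Gamma^{\rm loc}$ extends continuity to all $A\in\cA_\Gamma$.

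The main obstacle is the combinatorial reduction of the $Z$-sum in the second display: one must split the interactions $Z$ that survive $\cL_{\Lambda_n}-\cL_{\Lambda_m}$ by a witness point $w\in\Lambda_m\setminus\Lambda_n$, invoke Assumption~\ref{assumption1}(2) to sum over $Z$ through the two marked points $w$ and $x$, and then use property~(ii) of $F_\mu$ to compress the resulting convolution into a single tail $\sum_{w\in\Gamma\setminus\Lambda_n}F_\mu(d(w,y))$. It is precisely at this step that the $\Lambda$-independence of the Lieb-Robinson bound of Theorem~\ref{thm:lrbounds} is converted into genuine smallness as $n\to\infty$, and without it the uniform control needed for the thermodynamic limit would not be available.
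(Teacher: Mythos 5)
Your proposal is correct and follows essentially the same route as the paper's proof: the same Duhamel interpolation formula for the difference of two finite-volume cocycles, the same witness-point decomposition of the surviving interaction terms combined with the uniform Lieb-Robinson bound of Theorem~\ref{thm:lrbounds}, Assumption~\ref{assumption1}(2), and the convolution property (ii) of $F_\mu$ to produce a tail estimate $\sum_{y\in X}\sum_{w\in\Gamma\setminus\Lambda_n}F_\mu(d(w,y))$ vanishing as $n\to\infty$, followed by the standard density-and-contraction extension and the triangle-inequality argument for strong continuity. The only cosmetic difference is that you derive the Duhamel identity by differentiating $r\mapsto\gamma^{\Lambda_m}_{t,r}\gamma^{\Lambda_n}_{r,s}(A)$ in the interpolation variable, whereas the paper obtains the same formula by solving the inhomogeneous ODE satisfied by $f(t)=\gamma^{(n)}_{t,s}(A)-\gamma^{(m)}_{t,s}(A)$.
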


\section{Finite volume dynamics}\label{sec:finite}

Let $\cL(t)$, $t\geq 0$, denote a family of operators on a $C^*$-algebra $\cA$ satisfying
the assumptions of Theorem \ref{thm:finite} and for $0\leq s\leq t$ consider the maps
$\cA\ni A \mapsto \gamma_{t,s}(A)$ defined by the solutions of (\ref{ode}) with initial condition
$A$ at $t=s$. Without loss of generality we can assume $s=0$ in the proof of the theorem because,
if we denote $\tilde{\cL}(t)=\cL(t+s)$, then $\gamma_{t,s}=\tilde{\gamma}_{t-s,0}$, where $
\tilde{\gamma}_{t,0}$ is the maps determined by the generators $\tilde{\cL}(t)$.

The maps $\gamma_{t,s}$ satisfy the equation
\begin{equation}\label{sol_gamma}
\gamma_{t,s}=\id+\int_s^t \cL(\tau)\gamma_{\tau, s}d\tau.
\end{equation}
In our proof of the complete positivity of $\gamma_{t,0}$ we will use an expression
for $\gamma_{t,0}$ as the limit of an Euler product, i.e.. approximations $T_n(t)$ defined by
\begin{equation}\label{Tn}
T_n(t)= \prod_{k=n}^1\left(\id+\frac{t}{n}\mathcal{L}(\frac{kt}{n})\right)\, .
\end{equation}
The product is taken in the order so that the factor with $k=1$ is on the right.

 \begin{lemma}\label{lem:Euler_app}
Let $\cL(t)$, $t\geq 0$, denote a family of operators on a $C^*$-algebra $\cA$ satisfying
the assumptions of Theorem \ref{thm:finite}. Then, uniformly for all $t\in [0,T]$,
 \begin{equation*}
\lim_{n\to\infty} \Vert T_n(t) - \gamma_{t,0}\Vert = 0 \,,
 \end{equation*}
 where $T_n(t)$ is defined by \eq{Tn}.
 \end{lemma}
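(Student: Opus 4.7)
Since $t\mapsto\cL(t)$ is norm-continuous on the compact interval $[0,T]$ it is norm-bounded, with $M:=\sup_{t\in[0,T]}\|\cL(t)\|<\infty$, and uniformly continuous with some modulus $\omega(\delta)\to 0$ as $\delta\downarrow 0$. My plan is to interpolate between $T_n(t)$ and $\gamma_{t,0}$ via the time-ordered product of exponentials
\begin{equation*}
S_n(t):=\prod_{k=n}^{1}\exp\!\left(\tfrac{t}{n}\cL(kt/n)\right),
\end{equation*}
which equals $\psi_n(t)$, where $\psi_n(s)$ is the unique solution on $[0,t]$ of $\frac{d}{ds}\psi_n(s)=\cL_n(s)\psi_n(s)$ with $\psi_n(0)=\id$ and piecewise-constant generator $\cL_n(s):=\cL(kt/n)$ for $s\in((k-1)t/n,kt/n]$. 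The triangle inequality $\|T_n(t)-\gamma_{t,0}\|\leq \|T_n(t)-S_n(t)\|+\|S_n(t)-\gamma_{t,0}\|$ then reduces the proof to two independent estimates.

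For the first estimate I would apply the standard telescoping identity
\begin{equation*}
\prod_{k=n}^{1}A_k-\prod_{k=n}^{1}B_k=\sum_{j=1}^{n}\Bigl(\prod_{k=n}^{j+1}A_k\Bigr)(A_j-B_j)\Bigl(\prod_{k=j-1}^{1}B_k\Bigr)
\end{equation*}
with $A_k=\id+(t/n)\cL(kt/n)$ and $B_k=\exp((t/n)\cL(kt/n))$. Each factor satisfies $\|A_k\|\leq 1+tM/n$ and $\|B_k\|\leq e^{tM/n}$, so all partial products are bounded uniformly by $e^{TM}$; in particular $\|T_n(t)\|,\|S_n(t)\|\leq e^{TM}$. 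The Taylor-remainder bound $\|A_k-B_k\|\leq \tfrac{1}{2}(tM/n)^2 e^{tM/n}$ then makes each summand $O(n^{-2})e^{TM}$, and summing $n$ terms yields $\|T_n(t)-S_n(t)\|=O(1/n)$ uniformly in $t\in[0,T]$.

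For the second estimate, subtracting the integrated forms of the ODEs for $\psi_n$ and $\gamma_{s,0}$ (cf.\ \eq{sol_gamma}) gives
\begin{equation*}
\psi_n(s)-\gamma_{s,0}=\int_0^s\bigl(\cL_n(u)-\cL(u)\bigr)\psi_n(u)\,du+\int_0^s\cL(u)\bigl(\psi_n(u)-\gamma_{u,0}\bigr)\,du.
\end{equation*}
Uniform continuity of $\cL(\cdot)$ on $[0,T]$ yields $\|\cL_n(u)-\cL(u)\|\leq\omega(t/n)$ for all $u\in[0,t]$, so the first integral has norm at most $T\omega(t/n)e^{TM}$. Gronwall's inequality applied to the second then delivers $\|\psi_n(s)-\gamma_{s,0}\|\leq T\omega(t/n)e^{2TM}$ uniformly for $s\in[0,t]$ and $t\in[0,T]$, which tends to zero as $n\to\infty$. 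The only real obstacle is the bookkeeping in the telescoping step, where one must keep the ordering of partial products of $A$'s and $B$'s straight while maintaining their uniform norm bound; everything else is a routine Gronwall/continuous-dependence argument.
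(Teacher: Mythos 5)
Your proof is correct, and it takes a genuinely different (though related) route from the paper's. The paper telescopes \emph{directly} between $T_n(t)$ and the exact evolution, using the cocycle factorization $\gamma_{t,0}=\prod_{k=n}^{1}\gamma_{tk/n,\,t(k-1)/n}$, and bounds each one-step defect $\|(\id+\frac{t}{n}\cL(t\frac{j-1}{n}))-\gamma_{tj/n,\,t(j-1)/n}\|\le\frac{t}{n}(\epsilon_n+M_t^2e^{tM_t/n}\frac{t}{2n})$ straight from the integral equation \eq{sol_gamma}, controlling the outer partial products by $(1+\frac{t}{n}M_t)^n$ and by a Gronwall bound $\|\gamma_{t,s}\|\le e^{M_t(t-s)}$. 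You instead insert the time-ordered exponential product $S_n(t)$ as an interpolant, splitting the error into a Taylor-remainder comparison of $T_n$ with $S_n$ (each factor contributing $O(n^{-2})$, hence $O(1/n)$ in total) and a continuous-dependence estimate of $S_n$ against $\gamma_{t,0}$, where $S_n$ is correctly identified as the flow of the piecewise-constant generator $\cL_n$ and Gronwall absorbs the perturbation $\cL_n-\cL$. Both arguments rest on the same three ingredients --- telescoping, the uniform bound $M$, and uniform continuity of $\cL(\cdot)$ on $[0,T]$ --- and both give the same rate, $O(\omega(T/n)+n^{-1})$ (the paper's $\epsilon_n$ is your $\omega(t/n)$). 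What your route buys is a clean separation of the two error sources (discretization of the generator versus truncation of the exponential) and a reusable continuous-dependence lemma; what the paper's route buys is that it never leaves the family $\{\gamma_{t,s}\}$, needing no auxiliary evolution, since the cocycle property supplies the exact product structure to telescope against. One point worth making explicit in your write-up: choose the modulus $\omega$ nondecreasing, so that $\omega(t/n)\le\omega(T/n)$ for all $t\in[0,T]$; this is precisely what upgrades your final bound $T\omega(t/n)e^{2TM}$ to a bound uniform in $t$, as the lemma requires.
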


 \begin{proof}
 From the cocycle property established in Section \ref{sec:setup}, we have
 \begin{equation*}
\gamma_{t,0}=\prod_{k=n}^1 \gamma_{t\frac{k}{n}, t\frac{k-1}{n}}.
 \end{equation*}
 Now, consider the difference
 \begin{align*}
 T_n(t)-\gamma_{t,0}&=\prod_{k=n}^1\left(\id+\textstyle\frac{t}{n}\mathcal{L}(\textstyle\frac{kt}{n})\right)
 -\prod_{k=n}^1 \gamma_{t\frac{k}{n}, t\frac{k-1}{n}}\\
 &=\sum_{j=1}^n\left[\prod_{k=n}^{j+1}\left(\id+\textstyle\frac{t}{n}\mathcal{L}(t\textstyle\frac{k-1}{n})\right)\right]
 \left[\left(\id+\frac{t}{n}\mathcal{L}(t\textstyle\frac{j-1}{n})\right)-
 \gamma_{t\frac{j}{n}, t\frac{j-1}{n}}\right] \gamma_{t\frac{j-1}{n},0}.
 \end{align*}
  To estimate the norm of this difference we look at each factor separately.

Using the boundedness of $\cL(t)$ and the fact that $M_t$, defined in \eq{Mt},
is increasing in $t$, the norm of the first factor is bounded from above by
$$
\|\prod_{k=n}^{j+1}(\id+\frac{t}{n}\mathcal{L}(t\frac{k-1}{n}))\|\leq \prod_{k=n}^1(1+
\frac{t}{n}\|\mathcal{L}(t\frac{k-1}{n})\|)\leq (1+\frac{t}{n}M_t)^n.
$$

To bound the second factor notice that from (\ref{sol_gamma}) we obtain
\begin{equation*}
\|\gamma_{t,s}\|\leq 1+\int_s^t\|\cL(\tau)\|\|\gamma_{t,s}\|d\tau.
\end{equation*}
Then by Gronwall inequality \cite[Theorem 2.25]{hunter:2001} we have the following bound for 
the norm of the $\gamma_{t,s}$:
\begin{equation*}
\|\gamma_{t,s}\|\leq e^{\int_s^t\|\cL(\tau)\|d\tau}\leq e^{M_t(t-s)}.
\end{equation*}

Using again (\ref{sol_gamma}) we can rewrite the second factor as follows:
 \begin{align*}
 &\left(\id+\frac{t}{n}\mathcal{L}(t\frac{j-1}{n})\right)-\gamma_{t\frac{j}{n}, t\frac{j-1}{n}}=
 \frac{t}{n}\mathcal{L}(t\frac{j-1}{n})-(\gamma_{t\frac{j}{n}, t\frac{j-1}{n}}-\id)\\
& = \int_{t\frac{j-1}{n}}^{t\frac{j}{n}}
 \left (\cL(t\frac{j-1}{n})-\cL(s)\gamma_{s, t\frac{j-1}{n}}\right)ds\\
 &=\int_{t\frac{j-1}{n}}^{t\frac{j}{n}}\left[\left(\cL(t\frac{j-1}{n})-\cL(s)\right)-\cL(s)
 (\gamma_{s, t\frac{j-1}{n}}-\id)\right]ds\\
 &=\int_{t\frac{j-1}{n}}^{t\frac{j}{n}}\left(\cL(t\frac{j-1}{n})-\cL(s)\right) ds-\int_{t\frac{j-1}{n}}^{t\frac{j}{n}}\cL(s)
 \int_{t\frac{j-1}{n}}^s\cL(\tau)\gamma_{\tau, t\frac{j-1}{n}}d\tau ds.
 \end{align*}

Therefore, the second factor is bounded from above by
\begin{align*}
\|\left(\id+\frac{t}{n}\mathcal{L}(t\frac{j-1}{n})\right)-\gamma_{t\frac{j}{n}, t\frac{j-1}{n}}\|
&\leq\frac{t}{n}\epsilon_n+M_t^2\int_{t\frac{j-1}{n}}^{t\frac{j}{n}}\int_{t\frac{j-1}{n}}^s e^{(\tau-t\frac{j-1}{n})M_t}
d\tau ds\\
&\leq\frac{t}{n}\epsilon_n+M_t^2e^{\frac{t}{n}M_t}\int_{t\frac{j-1}{n}}^{t\frac{j}{n}} (s-t\frac{j-1}{n})ds\\&=\frac{t}{n}
\left(\epsilon_n+M_t^2 e^{\frac{t}{n}M_t}\frac{t}{2n}\right),
\end{align*}
where $\epsilon_n\to 0$ as $t/n\to 0$ due
to the uniform continuity of $\cL(t)$ on the interval $[0,t]$.

The third factor can be estimated in a similar way:
 \begin{align*}
 \|\gamma_{t\frac{j-1}{n},0}\|&=\prod_{k=j-1}^1\|\gamma_{t\frac{k}{n},t\frac{k-1}{n}}\|=\prod_{k=j-1}^1\|1+\frac{t}{n}
 \mathcal{L}(s_k(\frac{t}{n})) \|\\
 &\leq\prod_{k=j-1}^1\left(1+\frac{t}{n}\|\mathcal{L}(s_k(\frac{t}{n}))\|\right)\\
 &\leq (1+\frac{t}{n}M_t)^n.
 \end{align*}

Therefore, combining all these estimates we obtain
 \begin{align*}
\|T_n(t)-\gamma_{t,0}\|&\leq n (1+\frac{t}{n}M_t)^n \frac{t}{n}\left(\epsilon_n+M_t^2 e^{\frac{t}{n}M_t}\frac{t}{2n}\right)
 (1+\frac{t}{n}M_t)^n\\
&\leq t e^{2t M_t} \left(\epsilon_n+M_t^2 e^{\frac{t}{n}M_t}\frac{t}{2n}\right).
 \end{align*}
 This bound vanishes as $n\rightarrow\infty$.
 \end{proof}

To prove Theorem \ref{thm:finite} we use the Euler-type approximation established in
Lemma \ref{lem:Euler_app}. We show that the action of $T_n$ on a positive operator gives
a sequence of bounded from below operators such that the negative bounds vanish as $n$ 
goes to $\infty$.
\medskip

{\sc Proof of Theorem \ref{thm:finite}:}
First, we look at the each term in the Euler approximation $T_n$ separately.
For any $t$ and $s$ the complete dissipativity property (iii)  of $\mathcal{L}(s)$, assumed
in the statement of the theorem, implies
\begin{align*}
0&\leq (\id+t\mathcal{L}(s))(A^*)(\id+t\mathcal{L}(s))(A)=(A^*+t\mathcal{L}(s)(A^*))(A+t\mathcal{L}(s)(A))\\
& =A^*A+tA^*\mathcal{L}(s)(A)+t\mathcal{L}(s)(A^*)A+t^2\mathcal{L}(s)(A^*)\mathcal{L}(s)(A)\\
& \leq A^*A+t\mathcal{L}(s)(A^*A)+t^2\mathcal{L}(s)(A^*)\mathcal{L}(s)(A).
\end{align*}
Since $(\mathcal{L}(s)(A))^*(\mathcal{L}(s)(A))\leq \|\mathcal{L}(s)\|^2\|A\|,$ one gets
\begin{align}\label{norm_1}
0&\leq (\id+t\mathcal{L}(s))(A^*A)+t^2\|\mathcal{L}(s)\|^2\|A\|^2\\
& \leq(\id+t\mathcal{L}(s))(A^*A)+t^2M_s^2\|A\|^2.
\end{align}

Let us apply the above inequality to the operator$B $, where $B^*B:=\|A\|^2-A^*A$. Note that $\|B^*B\|\leq\|A\|^2$,
so $\|B\|\leq\|A\|$.
\begin{align}\label{norm_2}
0&\leq (\id+t\mathcal{L}(s))(\|A\|^2-A^*A)+t^2M_s^2\|A\|^2\\
& =\|A\|^2-(\id+t\mathcal{L}(s))(A^*A)+t^2M_s^2\|A\|^2
\end{align}

{From} the (\ref{norm_1}) and (\ref{norm_2}) we obtain
\begin{equation}\label{2}
-t^2M_s^2\|A\|^2\leq (\id+t\mathcal{L}(s))(A^*A)\leq (1+t^2M_s^2)\|A\|^2
\end{equation}
and therefore:
\begin{equation*}
-(1+t^2M_s^2)\|A\|^2\leq (\id+t\mathcal{L}(s))(A^*A) \leq (1+t^2M_s^2)\|A\|^2.
\end{equation*}
So we get
\begin{equation}\label{3}
\|(\id+t\mathcal{L}(s))(A^*A)\|\leq (1+t^2M_s^2)\|A\|^2 \ .
\end{equation}

Now, in order to bound the approximation $T_n$ we first derive the following auxiliary estimate.
For any fixed $n \geq 1$ we have:
\begin{equation}\label{bound}
\prod_{k=n}^1(\id+s\mathcal{L}(ks))(A^*A)\geq -s^2\|A\|^2M_{ns}^2(1+\frac{1}{n-1})^{n-1}\sum_{k=0}^{n-1} D(s)^k,
\end{equation}
where the value of $s$ is chosen to be such that
\begin{equation}\label{assumption_s}
D(s):=1+s^2M_{ns}^2 < {(1+\frac{1}{n-1})^{n-1}}/{(1+\frac{1}{n-2})^{n-2}},
\end{equation}
with the convention that $(1+\frac{1}{n-1})^{n-1} =1$, for $n=1$.

We prove this claim by induction. The statement holds for $n=1$ by (\ref{norm_2}).
Now, assume that \eq{bound} holds for $n-1$. Then
\begin{equation*}
\prod_{k=n-1}^{1}(\id+s\mathcal{L}(ks))(A^*A) +s^2\|A\|^2M_{(n-1)s}^2(1+\frac{1}{n-2})^{n-2}\sum_{k=0}^{n-2} D(s)^k\geq 0
\end{equation*}
Since the left-hand side is a positive operator, we can write it as $B^*B$. Then,
\begin{align*}
\prod_{k=n}^1(\id+s\mathcal{L}(ks))(A^*A)&=(\id+s\mathcal{L}(ns))(B^*B)\\&-s^2\|A\|^2M_{(n-1)s}^2(1+\frac{1}{n-2})^{n-2}
\sum_{k=0}^{n-2} D(s)^k\\
&\geq -s^2M_{ns}^2\|B^*B\|-s^2\|A\|^2M_{ns}^2(1+\frac{1}{n-2})^{n-2}\sum_{k=0}^{n-2} D(s)^k\, .
\end{align*}
Here, we used (\ref{norm_2}) and the fact that $M_t$ is monotone increasing.
This gives the following upper bound for $\Vert B^*B\Vert$:
\begin{align*}
\|B^*B\|&\leq \prod_{k=n-1}^{1} \|(\id+s\mathcal{L}(ks))(A^*A)\|+s^2\|A\|^2M_{(n-1)s}^2(1+\frac{1}{n-2})^{n-2}
\sum_{k=0}^{n-2}D(s)^k\\
&\leq\prod_{k=n-1}^{1}(1+s^2M_{ks}^2)\|A\|^2+s^2\|A\|^2M_{ns}^2(1+\frac{1}{n-2})^{n-2}\sum_{k=0}^{n-2}D(s)^k\\
& \leq\prod_{k=n-1}^{1}(1+s^2M_{ns}^2)\|A\|^2+s^2\|A\|^2M_{ns}^2(1+\frac{1}{n-2})^{n-2}\sum_{k=0}^{n-2}D(s)^k\\
&= \|A\|^2D(s)^{n-1}+s^2\|A\|^2M_{ns}^2(1+\frac{1}{n-2})^{n-2}\sum_{k=0}^{n-2}D(s)^k
\end{align*}
Therefore we obtain
\begin{align*}
&\prod_{k=n-1}^1(1+s\mathcal{L}(ks))(A^*A)\\
& \geq-s^2M_{ns}^2\|A\|^2D(s)^{n-1}-s^2M_{ns}^2(s^2M_{ns}^2+1)(1+\frac{1}{n-2})^{n-2}\|A\|^2\sum_{k=0}^{n-2}D(s)^k\\
&\geq-s^2M_{ns}^2\|A\|^2(1+\frac{1}{n-1})^{n-1}D(s)^{n-1}-s^2M_{ns}^2
(1+\frac{1}{n-1})^{n-1}\|A\|^2\sum_{k=0}^{n-2}D(s)^k\\&\geq-s^2M_{ns}^2
(1+\frac{1}{n-1})^{n-1}\|A\|^2\sum_{k=0}^{n-1}D(s)^k \ ,
\end{align*}
where to pass to the second inequality we use our assumption on $s$ \eq{assumption_s}.
This completes the proof of the bound (\ref{bound}).

To finish the proof of the theorem we use Lemma \ref{lem:Euler_app} to approximate the propagator and put 
$s=\frac{t}{n}$ in the bound (\ref{bound}), which yields
\begin{equation}\label{last_eq}
\prod_{k=n}^1(1+\frac{t}{n}\mathcal{L}(\frac{kt}{n}))(A^*A)\geq -\frac{t^2}{n^2}\|A\|^2M_t^2(1+\frac{1}{n-1})^{n-1}
\sum_{k=0}^{n-1} D(\frac{t}{n})^k.
\end{equation}
Since $D(\frac{t}{n})^n=(1+\frac{t^2}{n^2}M_t^2)^n \rightarrow 1$  as $n\rightarrow \infty$,
we get the estimate $D(\frac{t}{n})^k\leq 2$ for $1\leq k \leq n$. The right hand side of (\ref{last_eq}) 
is bounded from below by $-\frac{t^2}{n^2}\|A\|^2 e \, M_t^2 2n$, which vanishes in the limit $n\rightarrow \infty$.

To show the complete positivity of $\gamma_{t,0}$ note that any generator
$\mathcal{L}_\Lambda(t)$ satisfying the assumptions of the theorem
can be considered as the generator for a dynamics on $\cA\otimes \cB(\Cx^n)$, for any
$n\geq 1$, which satisfies the same properties, and which generates $\gamma_{t,s}\otimes\id$
acting on  $\cA\otimes \cB(\Cx^n)$. By the arguments given above,
these maps are positive for all $n$. Hence, the $\gamma_{t,s}$ are completely positive.
\boxendproof

\section{Lieb-Robinson bound}\label{sec:lrbounds}

Our derivation of the Lieb-Robinson bounds for $\gamma^{\Lambda}_{t,s}$ is based on 
a generalization of the strategy \cite{nachtergaele:2006} for reversible dynamics, and 
on \cite{poulin:2010} for irreversible dynamics with time-independent generators.
This allows us to cover the case of irreversible dynamics with time-dependent generators.
\medskip

\noindent
{\sc Proof of Theorem \ref{thm:lrbounds}:}
Consider the function $f:[s,\infty)\rightarrow \cA$ defined by
\begin{equation*}
f(t)=\cK\gamma_{t,s}^\Lambda(B),
\end{equation*}
where $\cK\in\cB_X$ and $B\in\cA_Y$, as in the statement of the theorem.
For $X\subset\Lambda$, let $X^c=\Lambda\setminus X$ and
define $\cL_{X^c}$ and $\bar{\cL}_{X}$ by
\begin{eqnarray*}
\cL_{X^c}(t)&=&\sum_{Z, Z\cap X= \emptyset}\mathcal{L}_Z(t)\\
\bar{\cL}_X(t)&=&\cL_X(t)-\cL_{X^c}(t).
\end{eqnarray*}
Clearly, $[\cK,\cL_{X^c}(t)]=0$. Using this property, we easily derive
the following expression for the derivative of $f$:
\begin{align*}
f'(t)&=\cK\mathcal{L}(t)\gamma_{t,s}^\Lambda(B) \\
& = \mathcal{L}_{X^c}(t)\cK\gamma_{t,s}^\Lambda(B)+\cK\bar{\cL}_{X}(t)\gamma_{t,s}^\Lambda(B) \\
& = \mathcal{L}_{X^c}(t)f(t)+\mathcal{K}\bar{\cL}_{X}(t)\gamma_{t,s}^\Lambda(B)\ ,
\end{align*}
Let $\gamma^{X^c}_{t.s}$ be the cocycle generated by $\cL_{X^c}(t)$. Then,
using the expression for $f'(t)$ we find
\begin{equation*}
f(t)= \gamma^{X^c}_{t,s}f(s)+\int_s^t\gamma_{t,r}^{X^c}\cK\bar{\cL}_{X}(r)
\gamma_{r,s}^\Lambda(B)dr \ .
\end{equation*}
Since $\gamma^{X^c}_{t,s}$ is norm-contracting and
$\Vert \cK\Vert_{\rm cb}$ is an upper bound for the 
$\Vert \cK\Vert$ regarded as an operator on $\cA_\Lambda$, for all
$\Lambda$, we obtain
\begin{equation}\label{f_norm_eneq}
\|f(t)\|\leq \|f(s)\|+\|\cK\|_{\rm cb}\int_s^t \|\bar{\cL}_{X}(r)\gamma_{r,s}^\Lambda(B)\|dr.
\end{equation}
Let us define the quantity
\begin{equation*}
C_B(X,t):= \sup_{\mathcal{T}\in\cB_X} \frac{\|\mathcal{T}
\gamma_{t,s}^\Lambda(B)\|}{\|\mathcal{T}\|_{\rm cb}}.
\end{equation*}
Note that we use the norm $\|\mathcal{T}\|_{\rm cb}$,
because, as mentioned before and in contrast to the usual operator norm,
 it is independent of $\Lambda$.
Then, we have the following obvious estimate:
\begin{equation*}
C_B(X,s)\leq \|B\|\delta_Y(X),
\end{equation*}
where $\delta_Y(X)=0$ if $X\cap Y=\emptyset$ and $\delta_Y(X)=1$ otherwise.
{From} the definition of the space
$\cB_X$ we get that $\mathcal{T}(B)=0$, when $\mathcal{T}\in\cB_X$, since
$B$ has a support in 
$Y$ and $Y\cap X=\emptyset.$ \\
Therefore  (\ref{f_norm_eneq}) implies that
\begin{equation*}
C_B(X,t)\leq C_B(X,s)+\sum_{Z\cap X \neq \emptyset} \int_s^t\|\mathcal{L}_Z(s)\| C_B(Z,s)ds.
\end{equation*}
Iterating this inequality we find the estimate:
\begin{equation*}
C_B(X,t)\leq \|B\| \sum_{n=0}^{\infty} \frac{(t-s)^n}{n!} \, a_n \ ,
\end{equation*}
where:
\begin{equation*}
a_n\leq \|\Psi\|_{t,\mu}^nC_\mu^{n-1}\sum_{x\in X}\sum_{y\in Y}F_\mu (d(x,y)),
\end{equation*}
for $n\geq 1$ and $a_0=1$, (recall that $C_\mu$ is a constant, that appears in a definition of $F_\mu$).
The following bound immediately follows from this estimate:
\begin{equation*}
\|\cK\gamma_{t,s}^\Lambda(B)\|\leq \frac{\|\cK\|_{\rm cb}\|B\|}{C_\mu} e^{\|\Psi\|_{t,\mu}
C_\mu (t-s)}\sum_{x\in X}\sum_{y\in Y}F_\mu(d(x,y)).
\end{equation*}
Using definition of $F_\mu$, we can rewrite this bound as
\begin{equation*}
\|\cK\gamma_{t,s}^\Lambda(B)\|\leq \frac{\|\cK\|_{\rm cb}\|B\|}{C_\mu} \|F\|\min(|X|, |Y|)
e^{-\mu(d(X,Y)-\frac{\|\Psi\|_{t,\mu} C_\mu}{\mu} (t-s))}.
\end{equation*}
So the Lieb-Robinson velocity of the propagation for every $t\in\mathbf{R}$ is
\begin{equation*}
v_{t,\mu}:=\frac{\|\Psi\|_{t,\mu} C_\mu}{\mu}.
\end{equation*}
\boxendproof

Note that the bound above depends only on the smallest of the supports of the
two observables. Therefore, in a situation where it makes sense to consider the
limit of infinite systems, one can get a non-trivial bound when one of the observables
has finite support but the support of the other is of infinite size (e.g., say half the system).

We would also like to point out that with the argument given in \cite{nachtergaele:2009b},
size of the support $|X|$, can be replaced by a suitable measure of the surface
area of the support, which gives a better estimate for observables with large
supports.

\section{Existence of the thermodynamic limit}\label{sec:thermodynamiclimit}

Our proof of existence of the thermodynamic limit mimics the method given in 
the paper \cite{nachtergaele:2006}.

\noindent
{\sc Proof of Theorem \ref{thm:thermodynamiclimit}:}
Denote $\mathcal{L}_n=\mathcal{L}_{\Lambda_n}$ and $\gamma_{t,s}^{\Lambda_n}=\gamma_{t,s}^{(n)}$.
Let $n>m$, then $\Lambda_m\subset\Lambda_n$ since we have the exhausting sequence of subsets in $\Gamma$. 
We will prove that for every observable $A\in\cA_X$ the sequence 
$(\gamma_{t,s}^n(A))_{n\geq 1}$ is a Cauchy sequence.
In order to do that for any local observable $A\in\cA_X$ we consider the function
\begin{equation*}
f(t):=\gamma_{t,s}^{(n)}(A)-\gamma_{t,s}^{(m)}(A) \ .
\end{equation*}
Calculating the derivative, we obtain
\begin{align*}
f^\prime(t)&=\mathcal{L}_n\gamma_{t,s}^{(n)}(A)-\mathcal{L}_m\gamma_{t,s}^{(m)}(A)\\
&= \mathcal{L}_n(t)(\gamma_{t,s}^{(n)}(A)-\gamma_{t,s}^{(m)}(A))+(\mathcal{L}_n(t)-\mathcal{L}_m(t))\gamma_{t,s}^{(m)}(A)\\
&=\mathcal{L}_n(t)f(t)+(\mathcal{L}_n(t)-\mathcal{L}_m(t))\gamma_{t,s}^{(m)}(A).
\end{align*}
The solution to this differential equation is
\begin{align*}
f(t)=\int_s^t \gamma_{t,r}^{(n)}(\mathcal{L}_n(r)-\mathcal{L}_m(r))\gamma_{r,s}^{(m)}Adr.
\end{align*}
Since $\gamma_{t,r}$ is norm-contracting, from this formula we get the estimate:
\begin{align}\label{estim-V}
\|f(t)\| &\leq \int_s^t \|(\mathcal{L}_n(r)-\mathcal{L}_m(r))\gamma_{r,s}^{(m)}(A) \|dr  \nonumber \\
&\leq \sum_{z\in \Lambda_n\setminus\Lambda_m}\sum_{Z\ni z}\int_s^t \|\Psi_Z(r)(\gamma_{r,s}^{(m)}(A))\|dr.\nonumber
\end{align}
Using the Lieb-Robinson bound and the exponential decay condition (\ref{Decay_L}),
which we assumed holds uniformly in $\Lambda$,  we find that
\begin{align*}
\|f(t)\|&\leq\frac{\|A\|}{C_\mu}\int_s^t e^{\mu v_{r,\mu} r} 
\sum_{z\in \Lambda_n\setminus\Lambda_m}\sum_{Z\ni z}\|\Psi_Z(r)\|_{\rm cb}
\sum_{x\in X}\sum_{y\in Z}F_\mu(d(x,y))dr\\
&\leq \frac{\|A\|}{C_\mu}\int_s^t e^{\mu v_{r,\mu} r} \sum_{z\in \Lambda_n\setminus\Lambda_m}\sum_{x\in X}\sum_{y\in\Gamma}
\sum_{Z\ni z,y}\|\Psi_Z(r)\|_{\rm cb}F_{\mu}(d(x,y))dr\\
&\leq \frac{\|A\|}{C_\mu}\|\Psi\|_{t,\mu}\int_s^t e^{\mu v_{r,\mu} r}dr \sum_{z\in \Lambda_n\setminus\Lambda_m}\sum_{x\in X}
\sum_{y\in\Gamma}  F_{\mu}(d(x,y))F_{\mu}(d(y,z))\\
&\leq \frac{\|A\|}{C_\mu}C_\mu\|\Psi\|_{t,\mu}\int_s^t e^{\mu v_{r,\mu} r}dr \sum_{z\in \Lambda_n\setminus\Lambda_m}
\sum_{x\in X}F_{\mu}(d(x,z))\\
&\leq  \frac{\|A\|}{C_\mu}C_\mu\|\Psi\|_{t,\mu}\int_s^t e^{\mu v_{r,\mu} r}dr |X| \sup_{x\in X}\sum_{z\in
\Lambda_n\setminus\Lambda_m}F_\mu(d(x,z)).
\end{align*}
Since $F_\mu$ is exponentially decaying when the distance $d(x,z)$ is increasing, we note that for 
$n, m\rightarrow\infty$, the last sum is goes to zero. Thus
\begin{equation*}
\|(\gamma_{t,s}^{(n)}-\gamma_{t,s}^{(m)})(A)\|\rightarrow 0, \textit{ as } n,m\rightarrow\infty.
\end{equation*}
Therefore the sequence $\{\gamma_{t,s}^{(n)}(A)\}_{n=0}^\infty$ is Cauchy and hence convergent. Denote the 
limit, and its extension to $\cA_\Gamma$, as $\gamma_{t,s}^\Gamma$.

To show that $\gamma_{t,s}^\Gamma$ is strongly continuous we notice that for
$0\leq s \leq t, r\leq T$, and any $A\in\cA_\Gamma^{\rm loc}$, we have
\begin{equation*}
\|\gamma_{t,s}^\Gamma (A)-\gamma_{r,s}^\Gamma(A)\|\leq \|\gamma_{t,s}^\Gamma(A)-\gamma_{t,s}^{(n)}(A)\|+
\|\gamma_{t,s}^{(n)}(A)-\gamma_{r,s}^{(n)}(A)\|+\|\gamma_{r,s}^{(n)}(A)-\gamma_{r,s}^\Gamma(A)\|,
\end{equation*}
for any $n\in\bN$ such that $A\in\cA_{\Lambda_n}$.
The strong continuity then follows from the strong convergence of $\gamma_{t,s}^{(n)}$ to
$\gamma_{t.s}^\Gamma$, uniformly in $s\leq t\in [0,T]$, and the strong continuity of
$\gamma_{t,s}^{(n)}$ in $t$. The continuity of the extension of $\gamma^\Gamma_{t,s}$ to all
of $A\in\cA_\Gamma$ follows by the standard density argument. The argument for continuity in
the second variable, $s$, is similar.
\boxendproof

\subsection*{Acknowledgments}

We thank Chris King for raising the question of Lieb-Robinson bounds for
irreversible quantum dynamics, and the referee for useful remarks and suggestions,
in particular for reminding  us about the recent work \cite{schuch:2010} and
raising the questions that led to some of the comments following Theorem \ref{thm:lrbounds}.

This work was supported by the National Science Foundation
under grants DMS-0757581 and DMS-1009502,
and by the France-Berkeley Fund under project \# 201013308.
V.A.Z. is thankful to the Mathematical Department of UC Davis for warm hospitality
and for the support from this Fund. B.N. acknowledges the support and hospitality of the 
Erwin Schr\"odinger International Institute for Mathematical Physics, Vienna.

\bibliographystyle{hamsplain}
\providecommand{\bysame}{\leavevmode\hbox to3em{\hrulefill}\thinspace}

\end{document}